\newcommand{\reais}{I\!\! R}
\newtheorem{theorem}{Theorem}[section]
\newtheorem{lemma}[theorem]{Lemma}
\newtheorem{corollary}[theorem]{Corollary}%
\title{A note on improving the search of optimal prices in envy-free perfect matchings}
\author{ Marcos Salvatierra\\ Normal Superior School \\ Amazonas State University \And Juan G. Colonna \\ Institute of Computing \\ Federal University of Amazonas \\ \And Mario Salvatierra Jr. \\ Institute of Computing \\ Federal University of Amazonas \\ \And Alcides de C. Amorim Neto \\ Normal Superior School \\ Amazonas State University
}
\date{}
\begin{document}
\maketitle

\begin{abstract}
	We present a method for finding envy-free prices in a combinatorial auction where the consumers' number $n$ coincides with that of distinct items for sale, each consumer can buy one single item and each item has only one unit available. This is a particular case of the {\it unit-demand envy-free pricing problem}, and was recently revisited by Arbib et al. (2019). These authors proved that using a Fibonacci heap for solving the maximum weight perfect matching and the Bellman-Ford algorithm for getting the envy-free prices, the overall time complexity for solving the problem is $O(n^3)$. We propose a method based on dynamic programming design strategy that seeks the optimal envy-free prices by increasing the consumers' utilities, which has the same cubic complexity time as the aforementioned approach, but whose theoretical and empirical results indicate that our method performs faster than the shortest paths strategy, obtaining an average time reduction in determining optimal envy-free prices of approximately 48\%.
\end{abstract}

\keywords{ polynomial-time algorithm \and combinatorial optimization \and dynamic programming \and envy-free pricing }

\section{Introduction}\label{sec1}

One of the main concerns in the business world is to make a decision in determining the prices of products to be sold in order to maximize the companies' revenue. On the other hand, there is concern about the feasibility of purchasing these products by consumers, since there are several segments of potential buyers with different purchasing powers. Thus, a challenge for the seller is to set prices for the products in order to reach the maximum number of consumers capable of acquiring them, thus increasing its revenue, while these prices fit into the consumers' budget.

This decision-making process also has computational challenges, and was addressed by \cite{Guruswami2005}, who formulated the \textit{envy-free pricing problem} in order to computationally model and solve this situation. 

To state the envy-free pricing problem, assume that there is a set $I $ of $m$ consumers and a set $J $ of $n $ different items. Each item has $c _j $ copies, and then a {\it supply vector} is given by $ \mathbf{c} = (c_1,c_2,\ldots, c_n) \in \mathbb{N}^n $. Each consumer has a {\it valuation}  $v_i(S)$ for each bundle $S \subseteq J$ of items (for convenience, it is assumed that $v _i (\emptyset) = 0 $ for all bidder $i \in I$); the $m \times 2^n $ matrix  of valuations is denoted by $V \in \reais_{\geq 0}^{m \times 2^n}$. Given a {\it price vector} $\mathbf{p} = (p_1,p_2,\ldots,p_n) \in \reais_{\geq 0}^n$, the {\it utility} that consumer $i$ derives from bundle $S$ is $U_i(S) = v_i(S) - p_S$,
where $p_S=\sum_{j \in S} p_j$. If consumer $i$'s utility for the bundle $S$ is non-negative, it is said that $S$ is {\it feasible} for $i$. Consumer $i$'s {\it demand set} $D_i$ contains the bundles that will make him more satisfied. Formally, $D_i = \arg \max_{S \subseteq J} U_i(S)$. Since not buying any bundle is always an option with utility $U_i(\emptyset) = 0$, it follows that $U_i(S) \geq 0$ for all $S \in D_i$.

Using this terminology, we can define: an allocation  $(S_1,S_2,\ldots,S_m)$ of bundles to consumers is {\it feasible} if each item $j$ is in at most $c_j$ sets $S_i$;  iven a pricing $\mathbf{p} = (p_1,p_2,\ldots,p_n)$, an allocation
	$(S_1,S_2,\ldots,S_m)$ is {\it envy-free} if $S_i \in D_i$ for all $i$, i.e., each
	consumer receives a bundle from his demand set; a pricing $\mathbf{p}$ is {\it envy-free} if it admits a feasible, envy-free
	allocation.

In short, the envy-free pricing problem can be stated as follows: given the input $(m,n,V,\mathbf{c})$, compute an envy-free pricing $\mathbf{p}$ and a corresponding envy-free allocation $(S_1,S_2,\ldots,S_m)$ maximizing the seller's revenue  $\sum_{i=1}^{m} p_{S_i}$. \cite{Guruswami2005} proved that the envy-free pricing problem is APX-hard even if each item exists in unlimited supply, and each consumer has equal valuations (of either 1 or 2) for all
the items he has any interest in.

Similarly to \cite{Arbib2019}, we are interested in the particular case where $| I |=| J | =n$, $c_j=1$ for all $j$, and $v_i(S)>0$ only when $| S |=1$ for all $i$, i.e., each consumer can buy exactly one item (this is the \textit{unit-demand} case  and each item has only one unit available, what they called the {\it envy-free perfect matching problem}. Therefore, discarding the empty set and the subsets $S\subseteq J$ such that $| S| >1$, the dimension of the valuations matrix reduces from $m \times 2 ^ n $ to $n \times n$, and the input $v _i (S) = v_{ij} $ denotes the value assigned by the bidder $i $ to the item $j $. The choice of this particular case is due to the fact that these authors proved that, with these characteristics, the problem can be solved in polynomial time, and therefore the development of increasingly efficient algorithms to solve it is a research question to be addressed.

\subsection{Related work}

After the presentation of the envy-free pricing problem by Guruswami et al. \cite{Guruswami2005}, several authors have sought to develop polynomial-time algorithms for special optimal cases. \cite{gunluk2008pricing} considered the case of non-uniform demand case, unlimited supply and the valuations matrix satisfying the Monge property, presenting an algorithm that solves the problem optimally in $O(mn^2)$ time.

In the case of unit-demand with unlimited supply, and each consumer has a travel cost if he wants to purchase the same type of product in a different location from his own, so that the costs of all consumers form a metric space, \cite{chen2011optimal} presented an algorithm that solves the problem optimally in $O(n^4)$ time. In the  case of unit-demand and the number of consumers coincides with the number of items for sale,  \cite{Arbib2019} presented an algorithm that solves the problem optimally in $O(n^3)$ time.

More recently,  \cite{marcosms2021a14100279} considered the same case studied by \cite{chen2011optimal}, but presenting an algorithm that solves the problem optimally in $O(n^3)$ time.

\subsection{Our contribution}
\cite{Arbib2019} proved that the envy-free perfect matching problem can be solved in cubic time, using maximum weight perfect matchings to find optimal envy-free allocations, and shortest paths to find optimal envy-free prices. In this work we propose to modify the way of finding the envy-free prices, using the consumers' utilities in the construction of numerical sequences that converge to the maximum utilities, in order to obtain the optimal envy-free prices. Keeping this in mind, we propose a procedure for theoretical purposes and an efficient version of it for empirical performance purposes. 

We proved that at least one consumer reaches its maximum utility at each loop iteration of the method, thus exploring an optimal substructure of the problem, under this new approach. In addition, the vector/matrix structure used by us to calculate maximum utilities and, consequently, optimal envy-free prices, showed to be more efficient than the network used by \cite{Arbib2019}, when implemented.

 The computational experiments showed that our approach had better time performance in the search for optimal solutions, with a gain of approximately 48\% compared to the shortest paths approach.


\section{Mathematical foundation}
\label{sec:math}

The {\it First Social Welfare Theorem} \cite{Nisan:2007:AGT:1296179} ensures that an allocation that ma\-xi\-mizes the social welfare, i.e.,  the sum of valuations of the buyers for their allocations, also maximizes the revenue with envy-free prices. Thus, finding optimal allocations is equivalent to finding maximum weight perfect matchings in the graph associated with matrix $V$. Let $X=(x_{ij}) \in \{0,1\}^{n\times n}$ be the binary matrix associated with such a matching. We have the following:
\begin{lemma} \label{lemmamaxdiag}
	Let $\{i_1,i_2,\ldots, i_r\}$ be a subset of $I$ and $V'= X^TV =(v'_{ij})$. Then 
	$
	v'_{i_2i_1} + v'_{i_3i_2} + \ldots + v'_{i_r i_{r-1}} + v'_{i_1i_r} \leq v'_{i_1i_1} + v'_{i_2i_2} + \ldots +v'_{i_{r-1} i_{r-1}} +v'_{i_ri_r}.$
	
\end{lemma}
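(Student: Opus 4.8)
The plan is to exploit the fact that $X$, being a maximum weight perfect matching with $|I|=|J|=n$, is a permutation matrix, and then to read the claimed inequality as a comparison between the weight that the optimal matching puts on a set of $r$ matched pairs and the weight of a cyclically shifted reassignment of those same pairs. First I would unpack the entries of $V'=X^TV$. Writing $\pi$ for the permutation realized by $X$, i.e.\ $x_{i,\pi(i)}=1$, a direct computation gives $v'_{ij}=\sum_{k=1}^n x_{ki}v_{kj}=v_{\pi^{-1}(i),j}$, since for fixed $i$ exactly one $k$ satisfies $x_{ki}=1$. In particular each diagonal entry $v'_{ii}=v_{\pi^{-1}(i),i}$ is the valuation that the consumer matched to $i$ assigns to that item, that is, the contribution of that matched pair to the total weight of the matching.

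Next I would rewrite both sides of the inequality in these terms. Setting $i_{r+1}:=i_1$, the left-hand side is $\sum_{s=1}^r v'_{i_{s+1}i_s}=\sum_{s=1}^r v_{\pi^{-1}(i_{s+1}),\,i_s}$, while the right-hand side is $\sum_{s=1}^r v'_{i_si_s}=\sum_{s=1}^r v_{\pi^{-1}(i_s),\,i_s}$. The right-hand side is therefore the total weight placed by $X$ on the $r$ pairs $(\pi^{-1}(i_s),i_s)$, whereas the left-hand side is the total weight obtained by instead assigning item $i_s$ to consumer $\pi^{-1}(i_{s+1})$ for every $s$.

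The key step is to recognize this second assignment as an admissible perfect matching. Because $\{i_1,\ldots,i_r\}$ are distinct and $\pi^{-1}$ is a bijection, the consumers $\pi^{-1}(i_1),\ldots,\pi^{-1}(i_r)$ are distinct; the alternative assignment merely permutes, by a single cyclic shift, which of these $r$ consumers receives which of the $r$ items, and leaves every other matched pair of $X$ untouched. Hence it is itself a perfect matching, and its total weight cannot exceed that of the maximum weight matching $X$. Subtracting the common contribution of the unchanged pairs from both totals leaves precisely the asserted inequality.

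I expect the only delicate point to be the index bookkeeping: one must verify that the cyclic shift in the off-diagonal terms is oriented so that the left-hand side really is the weight of a valid reassignment (for $r=2$ this reduces to swapping two matched pairs). Once the correspondence is established, no estimation is required, and the inequality follows immediately from the optimality of $X$.
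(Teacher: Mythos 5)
Your proof is correct and takes essentially the same approach as the paper: the paper argues by contradiction that a cyclic-shift permutation matrix $\Pi$ would satisfy $tr(\Pi^TV) > tr(X^TV)$, violating optimality of $X$, which is exactly your exchange argument phrased contrapositively. Your version simply spells out the index bookkeeping ($v'_{ij}=v_{\pi^{-1}(i),j}$) that the paper leaves implicit.
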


\begin{proof}
	Otherwise, there would be a permutation matrix $\Pi$ such that $\pi_{i_2i_1}=\pi_{i_3i_2}=\ldots=\pi_{i_ki_{k-1}}=\pi_{i_1i_k}=1,$ and then $tr(\Pi^TV) > tr(X^TV)$, which contradicts the optimality of X.
\end{proof}

Once an optimal allocation is found, consider the matrix $\mathcal{U}=(u_{ij})\in\reais^{n\times n}$ such that $u_{ij}=v'_{ij}-v'_{jj}$. This matrix inherits the fundamental property of matrix $V'$ shown in Lemma \ref{lemmamaxdiag}, which is stated in the following:
\begin{corollary}\label{corollarymaxdu}
	Let $\{i_1,i_2,\ldots, i_r\}$ be a subset of $I$. Then $$	u_{i_1 i_r} + \displaystyle \sum_{k=1}^{r-1} u_{i_{k+1} i_k} \leq 0 $$
	
\end{corollary}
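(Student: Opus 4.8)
The plan is to reduce the corollary directly to Lemma \ref{lemmamaxdiag} by substituting the definition $u_{ij} = v'_{ij} - v'_{jj}$ into the left-hand side and regrouping the resulting terms; no new structural idea is needed beyond this expansion. First I would rewrite each summand: for $k = 1, \ldots, r-1$ the term $u_{i_{k+1} i_k}$ becomes $v'_{i_{k+1} i_k} - v'_{i_k i_k}$, and the wrap-around term $u_{i_1 i_r}$ becomes $v'_{i_1 i_r} - v'_{i_r i_r}$.

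Next I would collect the two kinds of contributions separately. The off-diagonal parts $v'_{i_{k+1} i_k}$ together with $v'_{i_1 i_r}$ reproduce exactly the cyclic sum $v'_{i_2 i_1} + v'_{i_3 i_2} + \cdots + v'_{i_r i_{r-1}} + v'_{i_1 i_r}$ appearing on the left side of Lemma \ref{lemmamaxdiag}. The subtracted diagonal parts $-v'_{i_k i_k}$ together with $-v'_{i_r i_r}$ reproduce $-\left(v'_{i_1 i_1} + \cdots + v'_{i_r i_r}\right)$, the negative of the diagonal sum on the right side of the lemma. Hence the entire left-hand side of the corollary equals the cyclic sum minus the diagonal sum.

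Finally, a single application of Lemma \ref{lemmamaxdiag} shows that this difference is nonpositive, which is precisely the stated inequality. So the corollary follows as an immediate rearrangement once the lemma is in hand.

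The only point requiring care — the potential obstacle, though a mild one — is the index bookkeeping: I must verify that the diagonal term $v'_{i_k i_k}$ is subtracted exactly once for every $k \in \{1, \ldots, r\}$. Indeed, the summation index $k$ running from $1$ to $r-1$ supplies the subtractions for $i_1, \ldots, i_{r-1}$, while the wrap-around term $u_{i_1 i_r}$ supplies the missing subtraction for $i_r$; thus each diagonal index is covered exactly once and none is double-counted. Once this correspondence is confirmed, the regrouping is exact and the lemma applies verbatim, so I expect the proof to be short.
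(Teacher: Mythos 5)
Your proposal is correct and follows essentially the same route as the paper: both reduce the claim to Lemma \ref{lemmamaxdiag} by observing that the column-wise subtraction of $v'_{jj}$ contributes each diagonal entry exactly once to the cyclic sum and once to the diagonal sum, so the corollary's left-hand side is just (cyclic sum) minus (diagonal sum). Your version merely makes the index bookkeeping explicit where the paper phrases it as adding the same constants to both sides of the lemma's inequality.
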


\begin{proof}
	Matrix $\mathcal{U}$ is obtained by adding (negative) constants to each column of $V'$, which would introduce the constants relative to columns $i_1,i_2,\ldots, i_r$ on both sides of the inequality in Lemma \ref{lemmamaxdiag}. Since the diagonal elements of $\mathcal{U}$ are zero, the result follows.
\end{proof}

Due to space limitations, the proofs of Lemma \ref{lemmamaxdiag} and Corollary \ref{corollarymaxdu} were omitted.

The following lemma relates the matrix $\mathcal{U}$ to the buyers' utilities:
\begin{lemma}\label{lemmautilities}
	If $y_1, y_2, \ldots, y_n $ are minimal nonnegative numbers such that
	\begin{equation} \label{utilities}
	y_j \geq y_k + u_{jk}, \ \forall j,k\in I, j\neq k,
	\end{equation}
	then these numbers are the buyers' maximum utilities.
\end{lemma}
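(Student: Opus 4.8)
The plan is to read the inequality system (\ref{utilities}) as an explicit description of envy-freeness, once the optimal allocation has been placed on the diagonal of $V'$. Given any candidate vector $y=(y_1,\dots,y_n)$, set prices $p_j = v'_{jj}-y_j$. Then the utility buyer $i$ obtains from item $j$ is $v'_{ij}-p_j = (v'_{ij}-v'_{jj})+y_j = u_{ij}+y_j$, and since buyer $i$ may also decline to buy, his maximum utility is $U_i=\max\bigl(0,\max_{j}(u_{ij}+y_j)\bigr)$. Because $u_{ii}=0$, the choice $j=i$ already contributes the value $y_i$, so $U_i\ge y_i$ whenever $y_i\ge 0$. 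First I would observe that the constraints $y_j\ge y_k+u_{jk}$ for all $k\neq j$, together with $y_j\ge 0$, say exactly that no item (nor the empty bundle) gives buyer $j$ more than $y_j$; hence $U_j\le y_j$, and combined with $U_j\ge y_j$ this yields $U_j=y_j$. In other words, every nonnegative solution of (\ref{utilities}) is self-consistent: each $y_j$ is genuinely the maximum utility of buyer $j$ under the induced prices, the diagonal allocation lies in every demand set, and the prices are therefore envy-free.

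Next I would settle existence and the meaning of \emph{minimal}. The system $\{\,y\ge 0,\ y_j-y_k\ge u_{jk}\,\}$ is a set of difference constraints, and chaining the inequalities $y_{i_{k+1}}\ge y_{i_k}+u_{i_{k+1}i_k}$ and $y_{i_1}\ge y_{i_r}+u_{i_1 i_r}$ around a cycle $i_1\to i_2\to\cdots\to i_r\to i_1$ makes the $y$-terms cancel and forces the cyclic sum of the corresponding $u$-entries to be nonpositive; this is precisely Corollary \ref{corollarymaxdu}, which therefore guarantees that no positive cycle obstructs feasibility and that the region is nonempty. The feasible region is closed under componentwise minimum (if $y,y'$ are feasible then so is $\min(y,y')$, as is immediately checked) and is bounded below by $0$, so a unique componentwise-smallest solution $y^{*}$ exists; it is the least fixed point of $y_j=\max\bigl(0,\max_{k}(u_{jk}+y_k)\bigr)$ and can be obtained by iteratively lifting the utilities upward from $0$, matching the dynamic-programming viewpoint announced in the introduction.

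Finally I would tie minimality to optimality. Revenue equals $\sum_i p_i=\mathrm{tr}(V')-\sum_i y_i$, and $\mathrm{tr}(V')=\sum_i v'_{ii}$ is the fixed weight of the maximum-weight perfect matching $X$; hence minimizing $\sum_i y_i$ maximizes revenue, and the componentwise-minimal $y^{*}$ achieves this simultaneously in every coordinate. By the first paragraph each $y^{*}_i$ equals buyer $i$'s maximum utility under the resulting revenue-optimal envy-free prices, which is the assertion of the lemma. I expect the main obstacle to be the existence step rather than the self-consistency identity: one must argue that the difference-constraint system is consistent and that its infimum is attained, and this is exactly the point where Corollary \ref{corollarymaxdu} is indispensable, since it rules out the positive cycles that would render the constraints contradictory. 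A minor point to verify, should the model insist on nonnegative prices, is that the minimal solution satisfies $y^{*}_i\le v'_{ii}$, so that $p_i\ge 0$.
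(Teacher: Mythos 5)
Your first paragraph is exactly the paper's proof: substitute $p_j=v'_{jj}-y_j$, observe that buyer $j$'s utility from item $k$ is $u_{jk}+y_k\le y_j$ while item $j$ itself yields $y_j\ge 0$, hence $y_j$ is buyer $j$'s maximum utility and the diagonal allocation is envy-free. That identity is all the paper records for this lemma, so on the statement itself you are correct and take the same route. The remaining two paragraphs prove things the paper establishes elsewhere, by genuinely different means: you obtain existence and uniqueness of the componentwise-minimal solution abstractly, from the difference-constraints feasibility criterion (no positive cycle, i.e.\ Corollary \ref{corollarymaxdu}) together with closure of the feasible region under componentwise minimum, whereas the paper constructs that minimal solution explicitly as the limit of the iteration (\ref{sequence}) and proves minimality by a telescoping comparison (Theorem \ref{theoremconverge} and Corollary \ref{corollaryminimal}); likewise your observation that revenue equals $\mathrm{tr}(V')-\sum_i y_i$ is the content of Theorem \ref{theoremmaxprices}. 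Your lattice argument is cleaner for pure existence, but the paper's constructive version is what yields the algorithm and its $n-1$ iteration bound, so the two are complementary rather than interchangeable. One small point worth making explicit if you keep the revenue paragraph: to conclude optimality you must also note that \emph{every} envy-free price vector for this allocation induces a feasible $y$ via $y_j=v'_{jj}-p_j$, so that minimizing over the feasible region really is minimizing over all envy-free pricings.
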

\begin{proof}
	Indeed, putting $p_j=v'_{jj}-y_j,\ \forall j \in I$, we have 
	\begin{equation*}
	v'_{jj}-p_j = y_j \geq y_k + u_{jk} = v'_{kk} - p_{k} + v'_{jk} - v'_{kk} = v'_{jk} - p_k, \forall  j \neq k,
	\end{equation*}
	and the result follows.
\end{proof}

For each $j \in I$, consider the sequence $(y_j^{(t)})_{t \in \mathbb{N}}$ defined as follows:
\begin{equation}\label{sequence}
y_j^{(0)}=0,\ y_j^{(t+1)}=\displaystyle\max_{k \in I} \{y_k^{(t)}+u_{jk}\}.
\end{equation}

\begin{theorem}\label{theoremconverge}
	The sequences defined in (\ref{sequence}) are convergent, and converge in at most $n-1$ steps.
\end{theorem}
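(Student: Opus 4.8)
The plan is to read the recurrence (\ref{sequence}) as a dynamic program for longest walks in the complete directed graph on $I$ whose arc from $j$ to $k$ carries weight $u_{jk}$. First I would show by induction on $t$ that $y_j^{(t)}$ equals the maximum total weight of a walk of exactly $t$ arcs starting at $j$: the base case is the empty walk of weight $y_j^{(0)}=0$, and the inductive step follows by conditioning on the first arc $j\to k$, which contributes $u_{jk}$ and leaves an optimal length-$t$ walk out of $k$ of weight $y_k^{(t)}$, matching $y_j^{(t+1)}=\max_k\{y_k^{(t)}+u_{jk}\}$. Because every diagonal entry satisfies $u_{jj}=0$, taking $k=j$ in the maximum gives $y_j^{(t+1)}\ge y_j^{(t)}$, so each sequence is nondecreasing.

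Next I would use Corollary \ref{corollarymaxdu} to control cycles. After relabelling the index set, that corollary states precisely that every simple directed cycle in this graph has total weight $\le 0$. Hence, given any walk from $j$ that repeats a vertex, choosing an occurrence pair with the smallest gap isolates a simple cycle of nonpositive weight, and deleting it yields a strictly shorter walk from $j$ with the same endpoint whose weight is not smaller. Iterating this cycle-removal reduces any walk to a simple path out of $j$ of weight at least that of the original walk. A simple path visits each vertex at most once, hence uses at most $n-1$ arcs; writing $M_j$ for the maximum weight of a simple path out of $j$, this shows $y_j^{(t)}\le M_j$ for every $t$.

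For the reverse bound I would pad. Any simple path out of $j$ with $\ell\le n-1$ arcs extends to a walk of exactly $n-1$ arcs by appending $n-1-\ell$ self-loops at its last vertex, each of weight $0$, without changing the total weight; taking the best simple path gives $y_j^{(n-1)}\ge M_j$. Combining the two bounds yields $y_j^{(n-1)}=M_j$, and monotonicity forces $M_j=y_j^{(n-1)}\le y_j^{(t)}\le M_j$ for all $t\ge n-1$. Thus each sequence is constant from index $n-1$ onward, so it converges and does so in at most $n-1$ steps.

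The induction and the padding are routine; the step I expect to carry the real weight is the cycle-removal argument, where care is needed to verify that Corollary \ref{corollarymaxdu} genuinely applies to an arbitrary repeated-vertex configuration (which is why one extracts a minimal-gap simple cycle rather than an arbitrary closed sub-walk) and that excising this closed portion leaves a legitimate shorter walk with unchanged endpoint. Once the no-positive-cycle property is in hand, bounding optimal walks of every length by simple paths of at most $n-1$ arcs is the crux that pins down the stated step count.
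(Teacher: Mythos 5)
Your proof is correct, and it takes a genuinely different route from the paper's. The paper argues via a predecessor tree: it roots a tree at an artificial node $0$, makes $j$ a child of the $k$ attaining the maximum in the recurrence, and uses Corollary \ref{corollarymaxdu} to show that no root-to-node path in this tree can repeat a vertex, so the tree has height at most $n-1$ and the sequences stabilize. You instead prove an exact combinatorial characterization --- $y_j^{(t)}$ is the optimal weight of a $t$-arc walk out of $j$ --- and then sandwich it: cycle cancellation (licensed by Corollary \ref{corollarymaxdu}, which after relabelling is exactly the statement that simple directed cycles have nonpositive weight) gives the upper bound by simple paths, and padding with the zero-weight self-loops $u_{jj}=0$ gives the matching lower bound at $t=n-1$. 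The two arguments hinge on the same structural fact, but yours buys more: it establishes monotonicity of the sequences (which the paper never states), identifies the limit explicitly as the maximum-weight simple path value $M_j$, and makes rigorous the final inference from ``bounded tree height'' to ``$y_j^{(t+1)}=y_j^{(t)}$ for all $t\ge n-1$,'' which the paper leaves terse. Your care in extracting a minimal-gap simple cycle (so that the corollary, stated only for distinct indices, genuinely applies) is exactly the point where a sloppier version would break; note also that the degenerate self-loop ``cycle'' is harmless since $u_{jj}=0$.
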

\begin{proof}
	To visualize this fact, construct a tree rooted at node 0, and the other nodes are created by the following procedure: after the first step, if $y_j^{(1)}>0$, then node $j$ is a child of the root. In the next steps, if 
	\begin{equation*}
	y_k^{(t)}+u_{jk}=\displaystyle\max_{l\in I} \{y_l^{(t)}+u_{jl}\},
	\end{equation*} then node $j$ is a child of node $k$ that is on level $t$ (if there is another $k'$ that satisfies this condition, choose the smaller, without affecting the procedure).

	Note that a path from any node $j$ to root contains only distinct nodes, because if it were not so, assuming there is a path containing the repetition $\{j_1,j_2,\ldots,j_r,j_1\}$. We would have $y_{j_2}^{(s)}<y_{j_1}^{(s)}+u_{j_2 j_1}$, $y_{j_3}^{(s+1)}<y_{j_2}^{(s+1)}+u_{j_3 j_2}=y_{j_1}^{(s)}+u_{j_2 j_1}+u_{j_3 j_2}$, $\ldots$, $y_{j_r}^{(s+r-3)}<y_{j_{r-1}}^{(s+r-3)}+u_{j_r j_{r-1}}=y_{j_1}^{(s)}+u_{j_2 j_1}+u_{j_3 j_2}+\ldots+u_{j_r j_{r-1}}$,  $y_{j_1}^{(s+r-2)}<y_{j_{r}}^{(s+r-2)}+u_{j_1 j_r}=y_{j_1}^{(s)}+u_{j_2 j_1}+u_{j_3 j_2}+\ldots+u_{j_r j_{r-1}}+u_{j_1 j_r}$,

\noindent for some $s \in \mathbb{N}$, and then $u_{j_2 j_1}+u_{j_3 j_2}+\ldots+u_{j_r j_{r-1}}+u_{j_1 j_r}>y_{j_1}^{(s+r-2)}-y_{j_1}^{(s)}\geq0,$ which, by Corollary \ref{corollarymaxdu}, is impossible. Hence, the height of the tree is at most $n-1$, and therefore
	\begin{equation}\label{equationmaxsteps}
	y_j^{(t+1)}= y_j^{(t)},\quad  \forall j \in I,\ \forall t \geq n-1,
	\end{equation}
	from where the result follows.
\end{proof}
\begin{corollary}\label{corollaryj0}
	There is at least one $j_0 \in I$ such that 
	\begin{equation*}
	y_{j_0}^{(t)} = 0,\quad \forall t \in \mathbb{N}.
	\end{equation*}
\end{corollary}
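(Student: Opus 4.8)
The plan is to pass to the limit and reformulate everything as a longest-walk problem on the weighted digraph whose edge from $j$ to $k$ carries weight $u_{jk}$. First I would record two elementary monotonicity facts about the recursion in (\ref{sequence}): since $u_{jj}=0$, the choice $k=j$ gives $y_j^{(t+1)}\ge y_j^{(t)}+u_{jj}=y_j^{(t)}$, so each sequence is non-decreasing; and since $y_j^{(0)}=0$, each sequence is nonnegative. By Theorem \ref{theoremconverge} the sequences stabilize, so I may write $y_j:=y_j^{(n-1)}=\lim_t y_j^{(t)}$, and it suffices to produce a single index $j_0$ with $y_{j_0}=0$: monotonicity and nonnegativity then force $y_{j_0}^{(t)}=0$ for every $t$.

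Next I would establish the combinatorial meaning of the limits. A straightforward induction on $t$ shows that $y_j^{(t)}$ is exactly the maximum total weight $\sum_s u_{a_s a_{s+1}}$ over all walks $a_0=j,a_1,\ldots,a_t$ of length $t$; the zero diagonal lets me pad shorter walks with weight-$0$ self-loops, so in fact $y_j$ equals the supremum of the weight over walks of \emph{every} length starting at $j$. Corollary \ref{corollarymaxdu} is precisely the statement that no directed cycle has positive weight, which guarantees both that this supremum is finite and that it is attained by a simple path (deleting a cycle from a walk never decreases its weight). Consequently $\max_j y_j$ equals the maximum weight of a simple path in the digraph, taken over all starting vertices.

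With this in hand the corollary follows from a one-line extremal argument. I would fix a maximum-weight walk $P=(a_0,\ldots,a_L)$ --- equivalently a maximum-weight simple path, whose weight realizes $\max_j y_j$ --- and set $j_0:=a_L$, its terminal vertex. For any walk $Q$ starting at $a_L$, the concatenation $P\cdot Q$ is again a walk, so by maximality $\mathrm{wt}(P)+\mathrm{wt}(Q)=\mathrm{wt}(P\cdot Q)\le \mathrm{wt}(P)$, forcing $\mathrm{wt}(Q)\le 0$. Hence $y_{j_0}=\sup_Q \mathrm{wt}(Q)\le 0$, while $y_{j_0}\ge 0$ from the first step, so $y_{j_0}=0$, and therefore $y_{j_0}^{(t)}=0$ for all $t$.

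The part that needs the most care is the reformulation in the second paragraph, namely arguing that the limit $y_j$ is the maximum over walks of \emph{unbounded} length rather than merely over walks of some fixed length --- this is where the zero diagonal (for padding) and Corollary \ref{corollarymaxdu} (for finiteness and for reducing walks to simple paths) are both essential. Once that identification is clean the extremal step is immediate. It is worth noting that the naive attempt, exhibiting a row of $\mathcal{U}$ whose entries are all nonpositive, does not suffice: such a vertex can still acquire positive utility through a longer walk, and it is exactly the terminal-vertex argument that rules this out.
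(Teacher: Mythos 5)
Your proof is correct, but it takes a genuinely different route from the paper's. The paper derives the corollary from the tree built in the proof of Theorem \ref{theoremconverge}: the path from each node back to the root records the argmax choices of the recursion (\ref{sequence}), Corollary \ref{corollarymaxdu} forces these paths to be simple, and the paper then simply asserts that some index must be absent from the tree, i.e., never acquires positive utility. Your argument makes the same underlying structure explicit --- identifying $y_j^{(t)}$ with the maximum weight of a length-$t$ walk out of $j$ in the digraph with arc weights $u_{jk}$, using the zero diagonal to pad shorter walks, and using the nonpositivity of cycle weights (Corollary \ref{corollarymaxdu}) to reduce to simple paths --- but replaces the paper's existence assertion with a concrete extremal argument: the terminal vertex of a maximum-weight path cannot start any positive-weight walk, hence has limit utility $0$, hence (by monotonicity and nonnegativity of each sequence) utility $0$ at every step. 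This buys two things: it actually exhibits the index $j_0$ rather than inferring its existence, and it supplies the justification the paper leaves implicit (why some node must be missing from the tree, which does not follow merely from the paths being repetition-free). Your closing remark that exhibiting a row of $\mathcal{U}$ with all nonpositive entries would not suffice is also a correct and worthwhile caution, since such a vertex could still gain positive utility along a longer walk.
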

\begin{proof}
	Indeed, since the paths in the tree constructed in proof of Theorem \ref{theoremconverge} contains no repetitions, there is at least one $j_0 \in I$ such that node $j_0$ is not in it, and the result follows.
\end{proof}

\begin{corollary} \label{corolario:maxlocal}
    For each $t=0, \ldots, n-1$, there is a $j_t$ such that $y_{j_t}^{(t)}$ is maximum.
\end{corollary}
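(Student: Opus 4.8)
I read the statement as follows: for every iteration $t$ there is a consumer $j_t$ whose utility has already attained its final (maximal) value, that is $y_{j_t}^{(t)} = y_{j_t}$, where $y_{j_t} := \lim_{s} y_{j_t}^{(s)}$ is the maximum utility furnished by Lemma \ref{lemmautilities}; the real content is that such a consumer can be exhibited at \emph{every} step, and for the early steps a \emph{newly} converged one. The plan is to first pin down the limiting behaviour of (\ref{sequence}). Taking $k=j$ and using $u_{jj}=0$ gives $y_j^{(t+1)} \geq y_j^{(t)}$, so every sequence is nondecreasing; by Theorem \ref{theoremconverge} it is constant and equal to $y_j$ for $t \geq n-1$, whence $y_j^{(t)} \leq y_j$ for all $t$ and, passing to the limit in the recurrence, the fixed-point identity $y_j = \max_k\{y_k+u_{jk}\}$ holds. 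For each $j$ let $\tau_j \in \{0,\dots,n-1\}$ be the first step with $y_j^{(\tau_j)} = y_j$. The corollary then reduces to showing that $\{\tau_j : j\in I\}$ leaves no gaps, i.e. equals $\{0,1,\dots,T\}$ with $T=\max_j \tau_j$.

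For the base step $t=0$ I would invoke Corollary \ref{corollaryj0}, which yields a consumer $j_0$ with $y_{j_0}^{(t)}=0$ for all $t$, so $\tau_{j_0}=0$. The heart of the argument is then a ``no-gap'' lemma: if $\tau_j=\ell$ for some $\ell\geq 1$, then some $k$ has $\tau_k=\ell-1$. To prove it I would examine the step where $j$ converges, $y_j = y_j^{(\ell)} = \max_k\{y_k^{(\ell-1)}+u_{jk}\}$, attained at some $k$. Since $y_k^{(\ell-1)}\leq y_k$ and $y_k+u_{jk}\leq y_j$ (fixed-point inequality), the chain $y_j = y_k^{(\ell-1)}+u_{jk} \leq y_k+u_{jk}\leq y_j$ is forced to be all equalities, so $y_k^{(\ell-1)}=y_k$ and $\tau_k\leq \ell-1$. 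If one had $\tau_k\leq \ell-2$, then $y_k^{(\ell-2)}=y_k$ and $y_j^{(\ell-1)} \geq y_k^{(\ell-2)}+u_{jk} = y_j$, contradicting $y_j^{(\ell-1)}<y_j=y_j^{(\ell)}$; hence $\tau_k=\ell-1$. Iterating the lemma downward from an index realizing $T$ produces, for every $t\in\{0,\dots,T\}$, a consumer $j_t$ with $\tau_{j_t}=t$; for $T<t\leq n-1$ all sequences are already constant, so any index serves as $j_t$. This yields a witness for every $t$ and completes the proof.

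The main obstacle I anticipate is the tempting shortcut $j_t=\arg\max_j y_j^{(t)}$, which does \emph{not} work: the running maximiser may keep growing, since a leading utility can still borrow from another index through a positive entry $u_{jk}$, so the current leader need not have converged. The genuine work is therefore the no-gap lemma, whose two pressure points are the ceiling $y_j^{(t)}\leq y_j$ and the fixed-point identity (both delivered, via the acyclicity used in Theorem \ref{theoremconverge}), which I use to squeeze the responsible predecessor's value to $y_k$ and then to force its convergence exactly one step earlier. A minor but necessary caveat is that for iterations beyond $T$ the statement holds only in the weak ``existence'' sense, as no consumer converges afresh there.
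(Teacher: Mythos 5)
Your proof is correct, and it supplies genuine content that the paper's own two-line argument only gestures at. The paper proves the corollary by contradiction, appealing to the tree built in the proof of Theorem \ref{theoremconverge}: a ``new'' node must appear at each level because paths contain no repeated nodes, and Corollary \ref{corollaryj0} anchors the root at a consumer whose sequence is identically zero. Your route is the same idea unpacked into a direct forward argument: you first establish monotonicity of the sequences (\ref{sequence}) via $u_{jj}=0$, the ceiling $y_j^{(t)} \leq y_j$, and the fixed-point identity $y_j = \max_k\{y_k + u_{jk}\}$, and then prove an explicit ``no-gap'' lemma showing that the maximizing predecessor of a consumer converging at step $\ell$ must itself converge exactly at step $\ell-1$. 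This is precisely the parent--child relation in the paper's tree, but stated and verified without the tree, and it makes rigorous the step the paper leaves implicit (why the absence of a freshly converged consumer at some level would force a repeated node). What your version buys is a self-contained, checkable chain of inequalities and a clean handling of the degenerate case $t>T$ where no consumer converges afresh; what the paper's version buys is brevity by reusing the acyclicity already proved via Corollary \ref{corollarymaxdu}. Your closing caveat that $\arg\max_j y_j^{(t)}$ need not have converged is a correct and worthwhile clarification of what ``maximum'' means in the statement (maximum over $t$ for a fixed consumer, not over $j$ at a fixed step).
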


\begin{proof}
    Otherwise, there would be repetition of nodes in some path of the tree constructed in the proof of Theorem \ref{theoremconverge} and, by Corollary \ref{corollaryj0}, there is $j_0 \in I$ such that $\max_{t=0, \ldots, n-1} \{ y_{j_0}^{(t)} \} = 0$, and the result follows.
\end{proof}

\begin{corollary}\label{corollaryminimal}
	The sequences defined in (\ref{sequence}) converge to minimal nonnegative numbers $y_1,y_2,\ldots,y_n$ satisfying (\ref{utilities}).
\end{corollary}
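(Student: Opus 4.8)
The plan is to first record the existence and basic properties of the limits, and then to establish minimality, which is the only substantive claim. By Theorem \ref{theoremconverge} each sequence $(y_j^{(t)})_t$ stabilizes after at most $n-1$ steps, so the limit $y_j := y_j^{(n-1)}$ exists; it remains to show that the vector $(y_1,\ldots,y_n)$ is nonnegative, satisfies (\ref{utilities}), and is coordinatewise smallest among all such vectors. For nonnegativity I would first argue that each sequence is monotone nondecreasing: since $u_{jj}=0$, the recurrence (\ref{sequence}) always includes the term $y_j^{(t)}$ in its maximum, so $y_j^{(t+1)} \ge y_j^{(t)}$; as $y_j^{(0)}=0$, all terms and hence the limit are nonnegative. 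To see that the limit satisfies (\ref{utilities}), I pass to the limit in the recurrence: at stabilization $y_j = y_j^{(n)} = \max_{k}\{y_k + u_{jk}\} \ge y_k + u_{jk}$ for every $k \ne j$, which is exactly (\ref{utilities}).

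The heart of the argument is minimality. Let $z_1,\ldots,z_n$ be any nonnegative numbers satisfying (\ref{utilities}); I claim $y_j \le z_j$ for all $j$. I would prove by induction on $t$ that $y_j^{(t)} \le z_j$ for every $j$. The base case $t=0$ is immediate from $y_j^{(0)}=0 \le z_j$. For the inductive step, assuming $y_k^{(t)} \le z_k$ for all $k$, I bound each term of the maximum in (\ref{sequence}): for $k \ne j$, the induction hypothesis together with (\ref{utilities}) gives $y_k^{(t)} + u_{jk} \le z_k + u_{jk} \le z_j$, while for $k=j$ the term is $y_j^{(t)} + u_{jj} = y_j^{(t)} \le z_j$. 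Taking the maximum over $k$ yields $y_j^{(t+1)} \le z_j$, completing the induction; letting $t \to \infty$ gives $y_j \le z_j$.

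The one point that requires care, and the only place where the structure of $\mathcal{U}$ is used, is the treatment of the diagonal: the maximum in (\ref{sequence}) ranges over all $k \in I$ including $k=j$, whereas (\ref{utilities}) only constrains pairs with $j \ne k$. The identity $u_{jj}=0$ is what reconciles these, simultaneously guaranteeing monotonicity (hence nonnegativity) and letting the inductive bound absorb the $k=j$ case. Combining the three facts established above — nonnegativity, feasibility for (\ref{utilities}), and coordinatewise minimality — gives the claim, and by Lemma \ref{lemmautilities} the limits coincide with the buyers' maximum utilities. I do not expect any serious obstacle: once Theorem \ref{theoremconverge} supplies convergence, every remaining step is a short monotonicity or induction argument.
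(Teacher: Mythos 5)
Your proof is correct, but it takes a genuinely different route from the paper's. The paper argues backwards along the chain of argmax predecessors: writing $y_j = y_j^{(n)} = y_{j_1}^{(n-1)} + u_{jj_1}$, $y_{j_1}^{(n-1)} = y_{j_2}^{(n-2)} + u_{j_1j_2}$, and so on down to level $0$, pairing each equality with the corresponding inequality $z_{j_k} \geq z_{j_{k+1}} + u_{j_k j_{k+1}}$ for a competitor $z$, and telescoping to get $z_j - z_{j_n} \geq y_j - y_{j_n}^{(0)} = y_j$, whence $z_j \geq y_j$. You instead run a forward induction on $t$, showing $y_j^{(t)} \leq z_j$ for every feasible nonnegative $z$ — essentially the standard ``value iteration converges to the least fixed point'' argument. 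Both establish the same (strong) conclusion that $y$ is the coordinatewise minimum, not merely a minimal element. Your version buys a few things: it makes explicit the role of $u_{jj}=0$ in absorbing the $k=j$ term of the maximum, a point the paper's telescoping chain silently skirts (formula (\ref{utilities}) only constrains $j \neq k$, so the paper's pairing of equalities with $z$-inequalities tacitly assumes each argmax predecessor differs from its successor — harmless, since the diagonal term contributes $u_{jj}=0$, but unstated); it also supplies nonnegativity of the limits via monotonicity, which the paper takes for granted; and it avoids any reliance on the tree construction of Theorem \ref{theoremconverge} beyond bare convergence. The paper's argument is marginally shorter once (\ref{equationmaxsteps}) is in hand, but yours is cleaner and self-contained.
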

\begin{proof}
	Suppose there are nonnegative numbers $z_1,z_2, \ldots, z_n$ satisfying
	\begin{equation*}
	z_j \geq z_k + u_{jk}, \ \forall j,k\in I, j\neq k,
	\end{equation*}
	with $z_j \leq y_j,\ \forall j \in I$.
	Thus, for each $j \in I$, using (\ref{equationmaxsteps}), we have $y_j=y_j^{(n)} = y_{j_1}^{(n-1)}+u_{j j_1}$ and $ z_j \geq z_{j_1}+u_{j j_1}$, $y_{j_1}^{(n-1)} = y_{j_2}^{(n-2)}+u_{j_1j_2}$ and $ z_{j_1} \geq z_{j_2}+u_{j_1 j_2}, \ldots$, $y_{j_{n-1}}^{(1)} = y_{j_n}^{(0)}+u_{j_{n-1}j_n} $ and $ z_{j_{n-1}} \geq z_{j_n}+u_{j_{n-1} j_n}.$

	Subtracting the equations in the left from inequalities in the right, we obtain $z_j-z_{j_1} \geq y_j - y_{j_1}^{(n-1)},$ $z_{j_1}-z_{j_2} \geq y_{j_1}^{(n-1)} - y_{j_2}^{(n-2)},\ldots$ $z_{j_{n-1}}-z_{j_n} \geq y_{j_{n-1}}^{(1)} - y_{j_n}^{(0)}.$

	Summing all these inequalities, it follows that
	\begin{equation*}
	z_j - z_{j_n} \geq y_j - y_{j_n}^{(0)}.
	\end{equation*}
	Since $z_{j_n} \geq 0$ and $y_{j_n}^{(0)}=0$, we then have
	\begin{equation*}
	z_j \geq z_j - z_{j_n} \geq y_j - y_{j_n}^{(0)} \geq y_j.
	\end{equation*}
	Therefore, $z_j = y_j \ \forall j \in I$, and the result follows.
\end{proof}

\begin{theorem}\label{theoremmaxprices}
	Let $\mathbf{p}=(p_1,p_2,\ldots,p_n)$ be such that
	\begin{equation}\label{equationprices}
	p_j = v'_{jj} - y_j, \quad \forall j \in I.
	\end{equation}
	Then $\mathbf{p}$ is an envy-free price vector that maximizes the seller's revenue.
\end{theorem}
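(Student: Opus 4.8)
The plan is to establish the two assertions—that $\mathbf{p}$ is envy-free and that it is revenue-maximal—separately. For envy-freeness I would exhibit the allocation induced by the optimal matching $X$ itself, i.e.\ in the coordinates of $V'=X^TV$ the ``diagonal'' allocation in which the buyer occupying row $j$ receives item $j$. First I would read off the utility this buyer derives from each item under $\mathbf{p}$: from their matched item $j$ it is $v'_{jj}-p_j=y_j$, while from any other item $k$ it is $v'_{jk}-p_k=(v'_{jk}-v'_{kk})+y_k=u_{jk}+y_k$. Since the $y_j$ are nonnegative and satisfy (\ref{utilities}), we obtain $y_j\ge 0$ and $y_j\ge y_k+u_{jk}$ for all $k\neq j$, so the matched item lies in each buyer's demand set. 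Hence the diagonal allocation is feasible and envy-free and $\mathbf{p}$ is an envy-free price vector; this half is essentially Lemma \ref{lemmautilities} re-expressed through the prices.

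Next I would turn to revenue. Summing the prices gives the revenue of the diagonal allocation as $\sum_{j} p_j=\sum_j v'_{jj}-\sum_j y_j=\operatorname{tr}(V')-\sum_j y_j$, where $\operatorname{tr}(V')$ is exactly the weight of the maximum-weight perfect matching, i.e.\ the maximal attainable social welfare. Thus the revenue equals the maximal welfare minus the total buyer utility $\sum_j y_j$, and proving optimality reduces to showing that no envy-free pricing can realize a smaller total utility.

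To do so I would take an arbitrary envy-free pricing $\mathbf{q}$ together with a supporting envy-free allocation and, invoking the First Social Welfare Theorem, observe that this allocation is welfare-maximizing, so its welfare is again $\operatorname{tr}(V')$ and the total utility equals $\operatorname{tr}(V')-\sum_j q_j$. Comparing term by term with the diagonal allocation then forces the diagonal to be supported by $\mathbf{q}$ as well, so each buyer's maximal utility under $\mathbf{q}$ is $z_j:=v'_{jj}-q_j\ge 0$; envy-freeness against every other item yields $z_j\ge z_k+u_{jk}$, so $(z_j)$ is a nonnegative solution of (\ref{utilities}). The minimality established in Corollary \ref{corollaryminimal}—whose argument in fact delivers $y_j\le z_j$ for every feasible $(z_j)$, not merely for those dominated by $(y_j)$—then gives $\sum_j y_j\le\sum_j z_j$, whence $\operatorname{tr}(V')-\sum_j y_j\ge\operatorname{tr}(V')-\sum_j q_j$, and $\mathbf{p}$ is revenue-maximal.

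The main obstacle is this last paragraph: the clean identity ``revenue $=$ welfare $-$ total utility'' and the reduction of an arbitrary envy-free pricing to a nonnegative solution of (\ref{utilities}) both hinge on the fact that every envy-free pricing is supported by the fixed matching $X$, which needs the First Social Welfare Theorem together with the term-by-term comparison above. Everything else is a direct substitution into constraints that are already proved, so the crux is packaging the equilibrium structure carefully enough that Corollary \ref{corollaryminimal} can be applied to the utilities of a \emph{competing} pricing rather than only to the constructed sequence.
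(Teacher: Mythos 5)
Your proof is correct and follows essentially the same route as the paper: envy-freeness of the diagonal allocation via the constraints (\ref{utilities}) (Lemma \ref{lemmautilities}), and revenue-maximality via the componentwise minimality of the $y_j$ (Corollary \ref{corollaryminimal}) together with the First Social Welfare Theorem. You are in fact more explicit than the paper, which compresses the entire revenue comparison against a competing pricing into ``the result follows''; the only care needed is that the welfare identity $W=\operatorname{tr}(V')$ is guaranteed for the \emph{optimal} envy-free pricing (which is what the cited theorem provides and all your comparison requires), not for an arbitrary envy-free pricing such as one so high that only the empty allocation is supported.
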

\begin{proof}
	By Corollary \ref{corollaryminimal}, the nonnegative numbers $y_1,y_2,\ldots,y_n$ are the minimal satisfying (\ref{utilities}). Thus, putting the prices as in (\ref{equationprices}), Lemma \ref{lemmautilities} ensures that they maximize the buyers' utilities. Moreover, 
	\begin{equation*}
	y_i \geq y_j + u_{ij}=y_j+v'_{ij}-v'_{jj} \Rightarrow y_j - y_i +v'_{ij} \leq v'_{jj}, \quad \forall i,j \in I,\ i\neq j.
	\end{equation*}
	Fixing $i\in I$ such that $y_i=0$, whose existence is guaranteed by Collorary \ref{corollaryj0}, we have
	\begin{equation*}
	y_j +v'_{ij} \leq v'_{jj} \Rightarrow y_j \leq v'_{jj},\quad \forall j \in I.
	\end{equation*}
	Therefore, the items allocated are in buyers' demand sets, and the result follows.
\end{proof}


\section {Method for finding the envy-free prices}
\label{sec:algorithm}

Now, we will present a method that we will call \textsc{PricesEFPM}  that computes envy-free prices that maximizes the seller's revenue. The method is described in the procedure bellow:

\begin{enumerate}
    
    \item Construct the matrix $\mathcal{U}$ such that $u_{ij} = v'_{ij}-v'_{jj}$. 
    
    \item The initial sequences of utilities vector is defined as $\mathbf{y}^{(0)} = (0,0,\ldots,0)  $.
    
    \item The next sequences of utilities vector is defined as $y_j^{(1)} = \displaystyle\max_{k \in I} \{u_{jk}\} $.
    
    \item While $y_j^{(t)} > y_j^{(t-1)}$ for some $j $ do $y_j^{(t+1)} = \displaystyle\max_{k = 1,\ldots,n} \{y_k^{(t)}+u_{jk}\}$.

    \item Set the prices $p_j = v'_{jj}-y_j^{(t)}$.
\end{enumerate}

This procedure computes the envy-free prices that maximizes the seller's revenue. Intuitively, Step 1 builds the matrix $\mathcal{U}$, while the sequence of utilities starts with zeros in Step 2, followed by maximum utilities in Step 3. From there, Step 4 continue to build the sequence of utilities until a maximum is reached according to the Lemma~\ref{lemmautilities}, and the loop ends due to Theorem~\ref{theoremconverge}. Finally, the envy-free optimal prices that maximize revenue are calculated in Step 5, according to Theorem~\ref{theoremmaxprices}.

Next are the theoretical proofs of  correctness of the algorithm, as well as its time complexity.

\begin{theorem}(invariant)
	For all $t\in \mathbb{N}$, holds
	\begin{equation} \label{equationinvariant}
	y_j^{(t+1)} \geq y_k^{(t)}+u_{jk},\quad \forall j,k \in I,\ j\neq k.
	\end{equation}
\end{theorem}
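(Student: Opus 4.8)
The plan is to observe that this invariant is not a genuinely inductive statement but rather an immediate consequence of the defining recurrence (\ref{sequence}). The key point is that $y_j^{(t+1)}$ is \emph{defined} as the maximum over all $k \in I$ of the quantities $y_k^{(t)} + u_{jk}$, and the maximum of a finite collection of real numbers dominates every individual member of that collection. So the work is essentially to unwind the definition rather than to set up an initialization/maintenance argument.

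Concretely, I would fix an arbitrary step $t \in \mathbb{N}$ and an arbitrary index $j \in I$. By (\ref{sequence}), $y_j^{(t+1)} = \max_{k' \in I}\{y_{k'}^{(t)} + u_{jk'}\}$, where the maximum ranges over a finite, nonempty set and is therefore attained. Hence for every single $k \in I$ we have $y_j^{(t+1)} \geq y_k^{(t)} + u_{jk}$, since the left-hand side is the largest value of the family while the right-hand side is just one particular member of it. Restricting attention to indices $k \neq j$ then yields exactly (\ref{equationinvariant}), and since $t$ and $j$ were arbitrary, the inequality holds for all $t \in \mathbb{N}$ and all $j, k \in I$ with $j \neq k$.

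The only point that deserves a sentence of care — and it is the closest thing here to an obstacle — is confirming that the recurrence driving the loop (Step 4) coincides with (\ref{sequence}), so that the defining maximum really does range over the full index set $I$ (including, harmlessly, the term $k = j$, for which $u_{jj} = 0$ gives $y_j^{(t)} + u_{jj} = y_j^{(t)}$, which also shows the sequence is non-decreasing). Once this identification is made, no induction on $t$ is required: each instance of the inequality is entirely contained in the single recurrence step that produces $y_j^{(t+1)}$. I would therefore present the argument as one short paragraph rather than as a full loop-invariant proof, reserving the inductive machinery for the later statements (convergence and minimality) where it is genuinely needed.
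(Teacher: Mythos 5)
Your proof is correct and is essentially the paper's own argument: both reduce the claim to the observation that $y_j^{(t+1)}$ is defined as a maximum over the family $\{y_k^{(t)}+u_{jk}\}_{k\in I}$ and therefore dominates each member of that family. The paper merely dresses this up as an initialization-plus-maintenance loop-invariant argument, but its maintenance step never actually uses the inductive hypothesis, so your streamlined non-inductive presentation (including the harmless $k=j$ term with $u_{jj}=0$) is the same proof.
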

\begin{proof}
	Indeed, Step 3 defines
	$y_j^{(1)} =\displaystyle\max_{l \in I} \{u_{jl}\}
	=  \displaystyle\max_{l \in I} \{y_l^{(0)}+u_{jl}\}
	 \geq  y_k^{(0)}+u_{jk}, \forall  j\neq k.
	$

	Thus, the invariant given by inequalities (\ref{equationinvariant}) holds prior to the first while loop iteration, des\-cri\-bed in Step 4. Moreover, if stopping criterion is not satisfied, then  	
	$
	y_j^{(t+1)} =  \displaystyle\max_{l \in I} \{y_l^{(t)}+u_{jl}\} \geq y_k^{(t)}+u_{jk}, \forall j\neq k,
	$	
	and so, the invariant also holds at the end of an arbitrary while loop i\-te\-ra\-tion, just like in the beginning, and the result follows.
\end{proof}

\begin{theorem}(termination) The while loop stops in at most $n-1$ iterations.
\end{theorem}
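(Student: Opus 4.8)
The plan is to obtain termination as an essentially immediate corollary of the convergence result in Theorem~\ref{theoremconverge}, once I have pinned down how the loop index in Step~4 lines up with the step count proved there. The first thing I would establish is that the sequences $(y_j^{(t)})_t$ are monotone non-decreasing in $t$: since $u_{jj}=0$, taking $k=j$ inside the maximum in (\ref{sequence}) produces the admissible value $y_j^{(t)}+u_{jj}=y_j^{(t)}$, so $y_j^{(t+1)}=\max_{k\in I}\{y_k^{(t)}+u_{jk}\}\ge y_j^{(t)}$ for every $j$ and $t$, with the base case $y_j^{(1)}=\max_k u_{jk}\ge u_{jj}=0=y_j^{(0)}$. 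Monotonicity is what makes the stopping test meaningful: the negation of ``$y_j^{(t)}>y_j^{(t-1)}$ for some $j$'' is ``$y_j^{(t)}\le y_j^{(t-1)}$ for all $j$'', which under monotonicity forces $y_j^{(t)}=y_j^{(t-1)}$ for all $j$. Hence the loop halts exactly when two consecutive iterates coincide.

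Next I would record that such a coincidence is absorbing: if $y^{(t)}=y^{(t-1)}$ as vectors, then $y_j^{(t+1)}=\max_k\{y_k^{(t)}+u_{jk}\}=\max_k\{y_k^{(t-1)}+u_{jk}\}=y_j^{(t)}$, so no later iterate ever changes. This guarantees that once the stopping condition is met it stays met, and in particular that the iteration cannot spuriously resume after stalling.

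With these two facts in hand, the bound follows by invoking equation~(\ref{equationmaxsteps}) of Theorem~\ref{theoremconverge}, namely $y_j^{(t+1)}=y_j^{(t)}$ for every $j\in I$ and every $t\ge n-1$. In particular $y^{(n)}=y^{(n-1)}$, so the stopping test is satisfied no later than the check that compares $y^{(n)}$ with $y^{(n-1)}$. Tracing the index convention of Step~4, the loop body (which produces $y^{(t+1)}$ from the current $y^{(t)}$) can therefore execute only at the check indices $t=1,2,\ldots,n-1$, that is, at most $n-1$ times, which is precisely the claimed bound.

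The only genuine obstacle I anticipate is bookkeeping rather than mathematics: one must reconcile the off-by-one convention of the ``while $y_j^{(t)}>y_j^{(t-1)}$'' test with the statement $y_j^{(t+1)}=y_j^{(t)}$ for $t\ge n-1$, being careful that Theorem~\ref{theoremconverge} guarantees equality of consecutive iterates first at the pair $(n-1,n)$ and not earlier. This is exactly what places the worst-case count at $n-1$ iterations rather than $n$; everything else is a one-line appeal to monotonicity and to the already established convergence.
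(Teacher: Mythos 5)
Your proof is correct and follows essentially the same route as the paper's: both reduce termination to equation~(\ref{equationmaxsteps}) of Theorem~\ref{theoremconverge}, which gives $y_j^{(n)}=y_j^{(n-1)}$ for all $j$ and hence at most $n-1$ loop iterations. Your added observations (monotonicity via $u_{jj}=0$, so that the stopping test is equivalent to equality of consecutive iterates, and the absorbing nature of a fixed point) are correct and make explicit some bookkeeping the paper leaves implicit, but they do not change the underlying argument.
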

\begin{proof}
	By Theorem \ref{theoremconverge}, holds
	\begin{equation*}
	y_j^{((n-1)+1)} = y_j^{(n)} = y_j^{(n-1)},\quad \forall j \in I.
	\end{equation*}	
\end{proof}

Since $t=1$ at beginning of while loop (Step 4), the result follows.
\begin{theorem}(correctness for utilities)
	At the end of while loop, holds
	\begin{equation}\label{equationstoploop}
	y_j^{(t)}\geq y_k^{(t)} + u_{jk},\quad \forall j,k \in I, j\neq k,
	\end{equation}
	for some $t \in \{1,\ldots, n\}$.
\end{theorem}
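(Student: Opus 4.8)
The plan is to read the claim directly off the loop's stopping criterion together with the invariant already established in (\ref{equationinvariant}). First I would record the easy but essential fact that the sequences are non-decreasing in $t$: since $u_{jj}=0$, the term $y_j^{(t)}+u_{jj}=y_j^{(t)}$ always appears among the quantities being maximized in the recurrence, so $y_j^{(t+1)}=\max_{k\in I}\{y_k^{(t)}+u_{jk}\}\ge y_j^{(t)}$ for every $j$. This monotonicity is what lets me interpret the while condition correctly, and it is already consistent with Step 3, where $y_j^{(1)}=\max_{k\in I}\{u_{jk}\}=\max_{k\in I}\{y_k^{(0)}+u_{jk}\}\ge u_{jj}=0=y_j^{(0)}$.

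Next, let $t$ denote the index at which the loop halts. The loop runs while $y_j^{(t)}>y_j^{(t-1)}$ for some $j$, so termination means the negation holds, namely $y_j^{(t)}\le y_j^{(t-1)}$ for all $j$. Combined with the monotonicity $y_j^{(t)}\ge y_j^{(t-1)}$ just noted, this upgrades to the componentwise equality $y_j^{(t)}=y_j^{(t-1)}$ for every $j\in I$. This identity is the crux of the argument.

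With that in hand the conclusion is one substitution away. Applying the invariant (\ref{equationinvariant}) with $t-1$ in place of $t$ gives $y_j^{(t)}\ge y_k^{(t-1)}+u_{jk}$ for all $j\neq k$; replacing $y_k^{(t-1)}$ by the equal quantity $y_k^{(t)}$ produces $y_j^{(t)}\ge y_k^{(t)}+u_{jk}$, which is exactly (\ref{equationstoploop}). Finally, the termination theorem bounds the number of iterations by $n-1$ and the loop is entered with $t=1$, so the halting index satisfies $t\in\{1,\ldots,n\}$, as claimed.

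I expect the main obstacle to be purely a matter of careful index bookkeeping rather than any deep difficulty. One must ensure the invariant is genuinely available at level $t-1$ — including the extreme case $t=1$, where it degenerates to the base instance proved for $t=0$ with $y_k^{(0)}=0$ — and one must justify reading the failed while test as the componentwise equality above rather than merely as failure for a single coordinate, which is precisely why the monotonicity step cannot be skipped.
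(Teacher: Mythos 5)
Your proof is correct and follows essentially the same route as the paper: read off the stopping criterion as the componentwise equality $y_j^{(t)}=y_j^{(t-1)}$ and substitute it into the recurrence/invariant at level $t-1$. The only difference is that you explicitly justify upgrading the failed test ``$y_j^{(t)}\le y_j^{(t-1)}$ for all $j$'' to equality via the monotonicity $y_j^{(t+1)}\ge y_j^{(t)}$ (using $u_{jj}=0$), a step the paper's proof takes for granted.
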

\begin{proof}
	Indeed, when stopping criterion in Step 4 is satisfied, then (\ref{equationstoploop}) holds because 	
	 $y_j^{(t)} = y_j^{(t-1)} \Rightarrow  y_j^{(t)} = \displaystyle\max_{l \in I} \{y_l^{(t-1)}+u_{jl}\} 
	= \displaystyle\max_{l \in I} \{y_l^{(t)}+u_{jl}\}
	 \geq y_k^{(t)} + u_{jk}, $	 
	 for all $ j\neq k.$
\end{proof}

\begin{theorem}(correctness for prices) 
	The procedure returns an envy-free price vector $\mathbf{p}$ that maximizes the seller's revenue.
\end{theorem}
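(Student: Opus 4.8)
The plan is to recognize this statement as essentially a wrap-up that glues together the algorithmic facts (invariant, termination, correctness for utilities) to the mathematical facts already established for the idealized sequences, and then to invoke Theorem~\ref{theoremmaxprices} directly. So the real content is an identification: the vector $y_j^{(t)}$ returned by the procedure must be shown to coincide with the minimal nonnegative solution $y_j$ of the system (\ref{utilities}).

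First I would pin down the terminal value. By the termination theorem the while loop halts after at most $n-1$ iterations, at some step $t$ with $y_j^{(t)} = y_j^{(t-1)}$ for every $j \in I$ (the stopping criterion negates ``$y_j^{(t)} > y_j^{(t-1)}$ for some $j$''). Using the recursion (\ref{sequence}), this coordinate-wise fixed-point condition propagates: $y_j^{(t+1)} = \max_{k \in I}\{y_k^{(t)} + u_{jk}\} = \max_{k \in I}\{y_k^{(t-1)} + u_{jk}\} = y_j^{(t)}$, and likewise for all later indices. Hence the terminal vector returned by the procedure is precisely the limit $y_j$ of the convergent sequences of Theorem~\ref{theoremconverge}. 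Next I would apply Corollary~\ref{corollaryminimal}, which asserts that these limits $y_1, y_2, \ldots, y_n$ are exactly the minimal nonnegative numbers satisfying (\ref{utilities}). Therefore the prices produced in Step~5, namely $p_j = v'_{jj} - y_j^{(t)} = v'_{jj} - y_j$, are literally the prices of (\ref{equationprices}).

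With this identification in hand, it remains only to invoke Theorem~\ref{theoremmaxprices}, whose hypotheses are now satisfied verbatim, to conclude that $\mathbf{p}$ is an envy-free price vector maximizing the seller's revenue. The single point that deserves care — and the one I would regard as the main obstacle, even though it is minor — is justifying that the algorithmic halting condition returns the true minimal solution rather than a premature plateau. This is secured by combining two earlier facts: the recursion is monotone nondecreasing in $t$ (since $k = j$ with $u_{jj} = 0$ is always a candidate in the maximum, forcing $y_j^{(t+1)} \geq y_j^{(t)}$), and the correctness-for-utilities theorem shows that the stabilized vector already satisfies the constraints (\ref{utilities}); the minimality in Corollary~\ref{corollaryminimal} then excludes any spurious early fixed point distinct from the genuine limit. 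Once this is noted, the theorem follows immediately from the chain of earlier results, so I would keep the argument short rather than re-deriving any of them.
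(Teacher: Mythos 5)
Your proof is correct and takes essentially the same route as the paper: identify the prices produced in Step~5 with those of (\ref{equationprices}) and invoke Theorem~\ref{theoremmaxprices}. The paper's own proof is a two-line version of yours that leaves the identification of the terminal vector with the limit $y_j$ implicit; your extra care about monotonicity and ``premature plateaus'' is harmless but unnecessary, since (\ref{sequence}) is a deterministic recursion and any fixed point it reaches is automatically its limit.
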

\begin{proof}
	Step 5 produces a vector $\mathbf{p}$ that satisfies (\ref{equationprices}). Therefore, by Theorem \ref{theoremmaxprices}, the result follows.
\end{proof}

\begin{theorem}(time complexity)
	The procedure runs in $O(n^3)$ time.
\end{theorem}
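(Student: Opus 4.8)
The plan is to account for the running time of the five steps of \textsc{PricesEFPM} separately and then take the dominant term. Since the costly operations are the construction of $\mathcal{U}$ in Step~1 and the repeated evaluation of the maxima inside the while loop of Step~4, I would concentrate the analysis there and dispatch the remaining steps quickly.

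First I would bound Step~1. As written, $V' = X^T V$ is a product of two $n \times n$ matrices, costing $O(n^3)$ by the standard row-by-column algorithm (one could even exploit that $X$ is a permutation matrix and merely permute rows in $O(n^2)$); forming $u_{ij} = v'_{ij} - v'_{jj}$ by subtracting the diagonal entry from each column is then $O(n^2)$. Either way Step~1 is $O(n^3)$. Steps~2 and~3 are cheap: initializing $\mathbf{y}^{(0)}$ to the zero vector is $O(n)$, and computing $y_j^{(1)} = \max_{k \in I}\{u_{jk}\}$ for every $j$ requires $n$ maxima, each over $n$ terms, for a total of $O(n^2)$.

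The core of the argument is Step~4. For a single pass, updating one coordinate $y_j^{(t+1)} = \max_{k=1,\dots,n}\{y_k^{(t)} + u_{jk}\}$ is a maximum over $n$ candidates, so computing all $n$ coordinates and checking the stopping test $y_j^{(t)} > y_j^{(t-1)}$ costs $O(n^2)$ per iteration. The number of iterations is already controlled: by the termination theorem the loop halts after at most $n-1$ passes. Multiplying the per-iteration cost by the iteration count yields $O(n)\cdot O(n^2) = O(n^3)$ for the whole loop. Finally, Step~5 sets $p_j = v'_{jj} - y_j^{(t)}$ for each $j$ in $O(n)$.

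Summing the contributions gives $O(n^3) + O(n) + O(n^2) + O(n^3) + O(n) = O(n^3)$, as claimed. I expect no genuine obstacle here: the only place where an a priori unbounded number of operations could arise is the while loop, but the termination theorem already caps it at $n-1$ iterations, so what remains is the routine per-step bookkeeping above. The one point worth stating carefully is that the matrix product in Step~1 and the loop in Step~4 are the two independent $O(n^3)$ contributions that jointly set the overall complexity.
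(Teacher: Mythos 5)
Your proof is correct and follows essentially the same route as the paper: the dominant cost is the while loop, bounded by $O(n^2)$ work per iteration (an $n$-term maximum for each of the $n$ coordinates) times the at most $n-1$ iterations guaranteed by the termination theorem. The only difference is that you also account explicitly for Step~1 and the cheap steps, which the paper omits as non-dominant; this is a welcome but minor addition.
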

\begin{proof}	
	The highest computational cost corresponds to the while loop. This loop performs $n$ checks of $O(n)$ maximum operations in a vector, so in $O(n^2)$ time, and  performs these $O(n^2)$ operations in at most $n-1$ times. Therefore, the result follows.
\end{proof}


\section{Computational experiments} \label{sec:experiments}

Our algorithms were evaluated on several randomly generated datasets using a 64-bit Intel Core i3-6006U computer running a clock of 2.0 GHz with 4 GB of RAM.
The codes were implemented in C++ and compiled in g++ 8.1.0. We generated 15 random test cases for $n=1,000,\ n=2,000,\ n=5,000,\ n=10,000$ and $n=15,000$. In all random test cases, each $v_{ij}$ is a random integer chosen from a uniform distribution on the integers between $0$ and $1,000,000$. In implementing the Bellman-Ford algorithm, we removed the detection of negative cycles part, because if an optimal allocation is given as input, these cycles do not exist in the network that models the pricing subproblem \cite{Arbib2019}. The open source codes are available in the GitHub repository, at \url{https://github.com/marcosmsgithub/efpm.git}.

Table \ref{tablealgo} shows the average running times and the $95\%$ confidence intervals (CI), in seconds with two-digit precision, of the Bellman-Ford algorithm as well as of Algorithm \textsc{PricesEFPM}. As we expected, Algorithm \textsc{PricesEFPM} ran faster than the Bellman-Ford algorithm in all simulations. For n=1,000, Algorithm \textsc{PricesEFPM} obtained an average time reduction of 52,2\% in determining optimal envy-free prices compared to the Bellman-Ford algorithm. For n=2,000, the average time reduction was approximately 46,3\%. For n=5,000, n=10,000 and n=15,000, the average time reductions were approximately 51,6\%, 43,8\% and 44,8\%, respectively. Thus, in general, Algorithm \textsc{PricesEFPM} achieved an average time reduction in determining optimal envy-free prices of 47.7\% compared to the Bellman-Ford algorithm.

Furthermore, it is noted that the amplitudes of the 95\% confidence intervals of Algorithm \textsc{PricesEFPM} are smaller than those of the Bellman-Ford algorithm, showing Algorithm \textsc{PricesEFPM} to be more stable in the experiments performed, providing better statistical precision in the analyses.

	\begin{table}[!h] \centering
	\caption{Comparison between performance running times of the algorithms, with average times in seconds and confidence intervals of 95\%}  
		\label{tablealgo}
		\begin{adjustbox}{width=0.8\textwidth}

		\begin{tabular}{ccccc}
			\hline			
			\multicolumn{1}{c}{n} & \multicolumn{2}{c}{Bellman-Ford} & \multicolumn{2}{c}{ \textsc{PricesEFPM}} \\
			
			\cmidrule{2-5}

			& \multicolumn{1}{c}{Avg. time} & \multicolumn{1}{c}{$95\%$ CI} & \multicolumn{1}{c}{Avg. time} & \multicolumn{1}{c}{$95\%$ CI} \\
			
			\hline 
			 
			1,000 & 0.69 &  (0.56,0.74) &  0.33 & (0.28,0.38) \\

			2,000 & 3.89 &   (3.49,4.30) &   2.09 &  (1.80,2.39) \\

			5,000 & 53.82 &   (47.16,60.48) &  26.07 & (22.75,29.39) \\

			10,000 & 312.16 &   (286.39,337.94) &  175.39 & (161.80,188.81) \\

			15,000 &  864.60 &  (777.08,952.11) &  477.21 &  (431.19,523.22) \\			
				
			\hline

		\end{tabular}

\end{adjustbox}
           
\end{table}


\section{Conclusions}
\label{sec:conclusions}
We discussed the envy-free perfect matching problem by focusing on the pri\-cing subproblem. For the allocation subproblem, theoretical results guarantee that this can be solved as a maximum weight perfect matching problem, which is already well explored, and several $O(n^3)$ algorithms have already been de\-ve\-lo\-ped. We have been able to design an algorithm that uses a vector/matrix structure that allows faster computations in the search of envy-free prices than the network used by Bellman-Ford algorithm to find the envy-free prices as shortest paths. 

According to experiments performed, the algorithm proposed in this work obtained an average time reduction in determining optimal envy-free prices of at least 44\%, and at most 52\%, approximately, compared to the algorithm found in the literature. Even for the largest tested instances, when n=15,000, the reduction was approximately 45\%, and overall the reduction was approximately 48\%.

However, it should be noted that the strategy used by the proposed method applies only if the maximum weight perfect matching given as input is exact. If the matching is an approximation of the optimal, then the algorithm will go into an infinite loop, as the procedure will violate the structural condition of matrix ${\cal U}$ presented in Corollary \ref{corollarymaxdu}. This limitation encourages the design of efficient approximation algorithms for finding envy-free prices in envy-free perfect matchings when dealing with very large instances.




\bibliographystyle{unsrtnat}
\bibliography{references}  






\end{document}